\newtheorem{theorem}{Theorem}
\DeclareFontFamily{U}{matha}{\hyphenchar\font45}
\DeclareFontShape{U}{matha}{m}{n}{
      <5> <6> <7> <8> <9> <10> gen * matha
      <10.95> matha10 <12> <14.4> <17.28> <20.74> <24.88> matha12
      }{}
\title{Towards a dual formulation of quantum gravity via metric-curvature bijections}
\author{Praveen D. Xavier}
\affiliation{Rudolf Peierls Centre for Theoretical Physics, Clarendon Laboratory, Parks Road, University of Oxford, Oxford OX1 3PU, UK}
\emailAdd{praveen.xavier@physics.ox.ac.uk}
\abstract{We prove that Riemannian metrics in General Relativity in the \emph{`normal-coordinates'} gauge are in one-to-one correspondence with curvature 2-forms.
%satisfying what we call the 1\textsuperscript{st} and 2\textsuperscript{nd} `Bianchi identities for curvature'.
We discuss how this can be used as a change of variables in the operator formalism to construct a dual formulation of quantum gravity pertinent in the context of asymptotic safety-like approaches to quantum gravity.
%\cite{weinberg, ambjorn, xavier}.
}
\keywords{dual formulations, quantum gravity, curvature, non-perturbative methods, general relativity, asymptotic safety-like approaches, bijections, operator formalism}
\gdef\@fpheader{}
\begin{document}
\maketitle
\flushbottom

\section{Introduction}

\subsection{Revisiting a classical problem... \label{cps}}
In Yang-Mills (YM) theory, Wu and Yang have shown that there exists connections, unrelated by gauge transformation, that have the same curvature \cite{wu-yang}. This is known as the \emph{field copy problem} \cite{mostow}. 
Halpern pointed out that the curvature map restricted to connections in the \emph{axial gauge} is, however, injective \cite{halpern}.
Later, Durand and Mendel showed this also in the \emph{Fock-Shwinger gauge} \cite{durand}. 
(The Fock-Schwinger gauge is preferred to the axial gauge because of its relative simplicity.)
To be precise, Durand and Mendel showed that connections in the Fock-Shwinger gauge are mapped, by the curvature map, \emph{bijectively} to curvature 2-forms, $F$, satisfying the `YM Bianchi identity for curvature':
\begin{equation}
\begin{split}
    &dF+ig[A\wedge F]=0, \\
    \text{where}\quad&A_{\mu}(x)=\int_0^1 tx^\nu F_{\nu\mu}(tx) dt.
\end{split}
\end{equation}

%Consider the curvature map which takes the metric and maps it to the curvature 2-form. In analogy with Yang-Mills (YM), we expect that this map is, in general, not injective\footnote{In YM, it has been shown that the curvature map, which takes the connection and maps it to the curvature 2-form, is, in general, not injective \cite{wu-yang} -- this goes by the name of the \emph{field copy problem} \cite{mostow}.}.

Just as in YM, in General Relativity, Riemannian metrics unrelated by coordinate transformation may have the same curvature. 
Müller, Schubert and van de Ven have shown, however, that the curvature map restricted to metrics in the \emph{`normal-coordinates'} gauge is \emph{injective} \cite{muller}.
So that we can set up a bijection, we pose the following question: what is the image of this map?
In this paper we prove that the image (of the curvature map restricted to metrics in the `normal coordinates' gauge) is the restricted set of curvature 2-forms, $R^a_{b}$, satisfying what we call the \emph{`1\textsuperscript{st} and 2\textsuperscript{nd} Bianchi identities for curvature'}:
\begin{equation}
\begin{split}
    &R^a_{b} \wedge e^b=0, \\
    \text{where}\quad&e^a_\mu(x)=\delta^a_\mu+\int_0^1 t(1-t)x^b x^\nu R^a_{b\nu\mu}(tx) dt
\end{split}
\end{equation}
and 
\begin{equation}
\begin{split}
    &dR^a_b+\omega^a_c\wedge R^c_b-R^a_c\wedge \omega^c_b=0, \\
    \text{where}\quad&\omega^a_{b\mu}(x)=\int_0^1 tx^\nu R^a_{b\nu\mu}(tx) dt
\end{split}
\end{equation}
respectively. 
This result proves that there is a bijection between metrics in normal coordinates and curvature 2-forms satisfying the 1\textsuperscript{st} and 2\textsuperscript{nd} Bianchi identities for curvature. We discuss, below, how this result can be used to derive a dual formulation of quantum gravity.

\subsection{...and its quantum application}

A quantum theory is usually formulated in terms of operators $\vec q$ (which denotes the collection $\{q_\alpha^i\}$ -- where $\alpha$ may be a continuous index and $i$ may be a discrete index) and their canonical conjugates $\vec p$, with dynamics governed by a Hamiltonian $H(\vec q,\vec p)$. 
If one makes a change of variables $\vec q\to\vec Q(\vec q)$ (these are known as point canonical transformations in Hamiltonian mechanics), 
one can reformulate the quantum theory in terms of $\vec Q$ and its canonical conjugate $\vec P$. 
The theory is governed by the \emph{dual} Hamiltonian: 
$H(\vec q,\vec p)\to H_{\text{dual}}(\vec Q, \vec P)$.
%(deriving this requires the relation between $\vec p$ and $(\vec Q$, $\vec P)$, which can be calculated \cite{gervais1976point} (appendix)).
%In particular, the Hamiltonian transforms as $H(\vec q,\vec p)\to H_{\text{dual}}(\vec Q, \vec P)$, where $H_{\text{dual}}$ is the \emph{dual} Hamiltonian.
The corresponding phase-space path integral formulation will involve integrations over paths $\vec Q(t)$ and $\vec P(t)$.
(If either $\vec Q(t)$ or $\vec P(t)$ appears at most quadratically in the phase-space action, then it can be integrated out of the path integral -- as it happens in certain cases of interest.)
In this way one can arrive at a \emph{dual} path integral formulation of the theory.
%By appropriate changes of variables, such `dual formulations' may make non-perturbative aspects of the dynamics calculable.
%Changing variables in a quantum theory has the potential to simplify difficult aspects of the dynamics. By a change of variables we mean a transformation of the configuration space \cite{ashtekar2004background} of the Hilbert space\footnote{Recall that configuration space spans a basis for the Hilbert space.}. 
%Configuration space is usually described in terms of some coordinates $\vec q$. Let $\vec p$ be the canonical conjugates to $\vec q$. The Hamiltonian of the theory is written as $H(\vec q, \vec p)$.  A change of variables (also known as a point canonical transformation in Hamiltonian mechanics) is a transformation $\vec q\to \vec Q(q)$. 
%It is also possible to calculate the relation between $\vec p$ and $\vec P$ (the canonical conjugates to $\vec Q$) \cite{gervais1976point} (appendix). With these relations to hand, the Hamiltonian $H(\vec q,\vec p)$ can be reformulated in terms of $\vec Q$ and $\vec P$: $H(\vec q,\vec p)\to H_{\text{dual}}(\vec Q, \vec P)$ where $H_{\text{dual}}$ is the \emph{dual} Hamiltonian. The corresponding path integral formulation will involve integrations over paths $\vec Q(t)$ and $\vec P(t)$. If either $\vec Q$ or $\vec P$ appears at most quadratically in the phase-space action, then it can integrated out. In this way one can arrive at a \emph{dual} path integral formulation of the theory.

In soliton physics, a change of variables has been used to calculate transition amplitudes involving solitons \cite{Gervals1976, gervais1976point, mandelstam1975soliton, tomboulis1975canonical, christ1975quantum}.
In the statistical mechanics problem of the Coulomb gas, a quantum change of variables has been used to prove the occurrence of a mass gap \cite{Brydges} (see also the discussion in \cite{clay} \S6.6).

As an application of this method to the non-perturbative dynamics of Yang-Mills (YM), consider the Durand-Mendel result of \S\ref{cps}. Using this result, the connection on a spatial slice in the Fock-Shwinger gauge can be mapped bijectively to the magnetic field subject to the YM Bianchi identity for curvature.
The field that is canonically conjugate to the magnetic field is the \emph{dual connection} \cite{hooft} on the spatial slice \cite{halpern} (see Fig. \ref{cov}).
%The dual connection \cite{hooft} on the spatial slice is canonically conjugate to the magnetic field \cite{halpern} (see Fig. \ref{cov}). 
%So, upon the change of variables, the new phase-space path integral will contain integrations over histories of the magnetic field and spatial dual connection. 
So the change of variables allows one to reformulate the theory in terms of the magnetic field and spatial dual connection. 
The magnetic field can be integrated out of the phase-space path integral completely (because it appears only quadratically in the phase-space action), allowing the path integral to be expressed solely in terms of the dual connection.
Now, a formulation of YM in terms of the dual connection has long been a candidate for solving the non-perturbative dynamics of YM \cite{kondo}: it is established that if the dual connection acquires a v.e.v. \cite{Seiberg, Giacomo} or a mass \cite{Tong} (\S8.3.2), \cite{clay}, then confinement follows.
\begin{figure}[ht]
    \centering
    \includegraphics[scale=0.4]{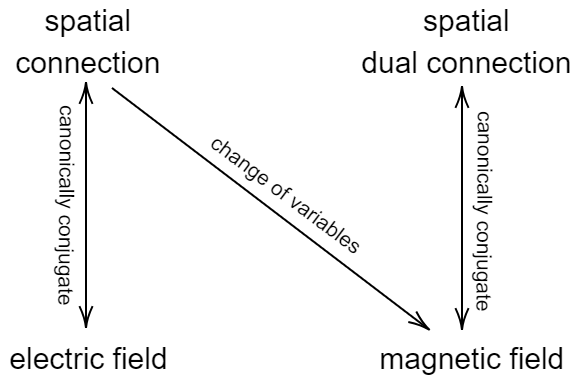}
    \caption{The chain of relations.}
    \label{cov}
\end{figure}

%In this paper we suggest an application of this method to quantum gravity. 
%In quantum gravity, Weinberg has suggested the scenario of \emph{asymptotic safety} \cite{weinberg}. The hypothesis of this scenario is that the low-energy description afforded by naively quantizing General Relativity flows to a strongly-coupled fixed point in the UV (in the space of couplings associated with all allowed interactions). If this fixed point has a \emph{finite} number of UV attractive directions then one can construct a (conventionally) renormalizable field theory by performing a perturbative expansion in the UV attractive directions around the fixed point \cite{ambjorn, xavier}. However, since such a fixed point would lie at non-zero coupling strength, a non-perturbative theory of gravity is required in order to fully investigate this scenario. 

%In quantum gravity (QG), the issue of non-renormalizability can be restated as the theory's apparent inability to make predictions. This appears so because an effective field theory (EFT) of gravity (which arises naturally because counter-terms to the Einstein action must be added) contains an infinite number of couplings which obviously can't be determined by a finite number of experiments to be used for predictions. However, the issue must be reconsidered.

%Consider now the theory of quantum gravity. 
In quantum gravity, the effective field theory (EFT) formalism arises naturally because counter-terms to the Einstein action generate all interactions allowed by symmetry. In the EFT of gravity there are an infinite number of allowed interactions.
The space of couplings associated with these interactions is, obviously, infinite-dimensional. 
%In the effective field theory (EFT) of gravity, all interactions allowed by symmetry are present -- of which there are an infinite number.
%(The EFT arises naturally because counter-terms to the Einstein action must be added, which generates all possible interactions.)
The question of the high energy behaviour of the theory comes down to the renormalization group flow of the couplings with the energy scale -- the initial conditions for this flow are determined by the requirement that at low-energies the theory reproduces general relativity, i.e., at low energies, all but Newton's constant and the cosmological constant are zero (these themselves being determined by low-energy, classical experiments).
Since Newton's constant, $G$, has negative mass dimension (which is the source of gravity's nonrenormalizability), the free-fixed-point (with all couplings equal to zero) is ultraviolet repulsive in the direction of $G$. Therefore the couplings grow as the energy increases, and perturbation theory can no longer describe the flow accurately. 

Nonrenormalizability is in itself not disastrous (in fact, from this point of view, nonrenormalizability is inconsequential for the UV finiteness of the theory), but 
what would be disastrous for the theory (in the sense that the theory cannot be physical) is if reaction rates develop singularities at finite, but very high, energies (otherwise reaction rates are finite at all energies and everything is dandy). 
It was Wienberg \cite{weinbergcritical} who first proposed this more general criteria. Notice that it is vastly more accommodating than the criteria of renormalizability. We will say that a theory satisfying Weinberg's criteria is \emph{UV finite}.
In addition, Weinberg \cite{weinberg, ambjorn, xavier} suggested that one way of, fairly surely, 
%controlling the high energy behaviour of reaction rates
guaranteeing that this condition is met is to require that the couplings asymptotically approach fixed values at high energies -- in this case, reaction rates can, reasonably, be expected to be finite at all energies. 
This goes under the name of \emph{asymptotic safety}. 
Asymptotic safety appears to be a \emph{sufficient} condition for UV finiteness but it is certainly not a \emph{necessary} condition. In fact, even if the couplings diverged at finite, but high, energy (let alone asymptotically), it is not clear whether reactions rates would too. 
%The issue of `nonrenormalizability' in gravity is therefore, at the very least, overstated. 

%So we learn that `nonrenormalizability' says next to nothing about the UV finiteness of the theory and that
So we learn that `nonrenormalizability' cannot be conflated with the theory being unphysical; it is simply an indication that we must look beyond perturbation theory. 
%nonrenormalizability only tells us that the couplings flow away from the free-fixed-point as the energy grows, meaning that we must look beyond perturbation theory.
What is more important than the pronouncement that gravity is nonrenormalizable is to determine the flow of the coupling constants and the reaction rates as functions of the couplings beyond perturbation theory.
What is required, then, is a formulation of quantum gravity in which non-perturbative calculations are made possible.

%Reaction rates can be written as \emph{finite} functions of the couplings in perturbation theory.
%-- which means that if the couplings are specified then the reaction rates are \emph{finite} and determined. 
%The couplings, however, depend on the energy scale being probed. 
%At low-energies, the couplings are determined by the requirement that the theory reproduce general relativity, i.e. all but Newton's constant and the cosmological constant are zero (these themselves being determined by low-energy, classical experiments). 
%The problem in QG is to determine to where -- in coupling space -- this low-energy theory flows as the energy is increased. If this is determined then so are all reaction rates. 
%Naive perturbation theory suggests that the couplings grow (which is a simple consequence of the fact that Newton's constant has negative mass dimension), which invalidates perturbation theory as a tool to study the flow proper. A non-perturbative formulation of QG may, however, allow us to calculate this flow. 

%We prove in this paper that, in gravity, there exists a change of variables from the spatial metric, modulo diffeomorphisms, to the spatial curvature, satisfying the 1\textsuperscript{st} and 2\textsuperscript{nd} Bianchi identity for curvature.
%(expressed entirely in terms of the curvature).
%As discussed previously, a dual formulation of YM can be constructed using a change of variables from the spatial connection to the magnetic field (which is essentially the curvature of the spatial connection). 
The result of this paper can be used to derive a dual formulation of quantum gravity as follows: one can bijectively map the \emph{spatial metric} in the normal-coordinates gauge to the spatial curvature 2-form subject to the 1\textsuperscript{st} and 2\textsuperscript{nd} Bianchi identities for curvature.
In line with the discussions above, this change of variables allows one to reformulate quantum gravity in terms of the spatial curvature and its canonical conjugate. 
The change of variables from the metric to the curvature is in analogy with the change of variables from the connection to the curvature in YM. 
(We should also point out that, as in YM, it may happen that the spatial curvature can be integrated out of the phase-space path integral in the dual formulation of quantum gravity, allowing a complete formulation in terms of its canonical conjugate.) 
The expectation that this dual formulation of quantum gravity will be non-perturbative appeals to the fact that, as we have discussed, the analogous variables change in YM provides a formulation which has strong reasons to be non-perturbative.

\subsection{Plan}

The plan of the paper is as follows. 
In \S\ref{YM} we review the Durand-Mendel result in YM. In \S\ref{MSV} we review the Müller-Schubert-van de Ven result in gravity.
In \S\ref{cl1}-\S\ref{cl3} -- which are our original contributions -- we prove the following statements respectively:
\begin{itemize}
    \item Curvatures satisfying the 1\textsuperscript{st} and 2\textsuperscript{nd} Bianchi identities for curvature are bijective with spin-connections satisfying the Fock-Shwinger gauge and torsionless conditions.
    \item Spin-connections satisfying the Fock-Shwinger gauge and torsionless conditions are bijective with vielbeins satisfying \eqref{one}-\eqref{three}.
    \item Vielbeins satisfying \eqref{one}-\eqref{three} are bijective with metrics in normal coordinates.
\end{itemize}
These statements, taken together, allow us to prove that: \emph{curvatures satisfying the 1\textsuperscript{st} and 2\textsuperscript{nd} Bianchi identities for curvature are bijective with metrics in normal coordinates}.
%and we provide an outlook of how this can be applied to construct a dual formulation of quantum gravity.
Fig. \ref{psumm} below summarizes the proof.
\begin{figure}[h]
    \centering
    \includegraphics[scale=0.43]{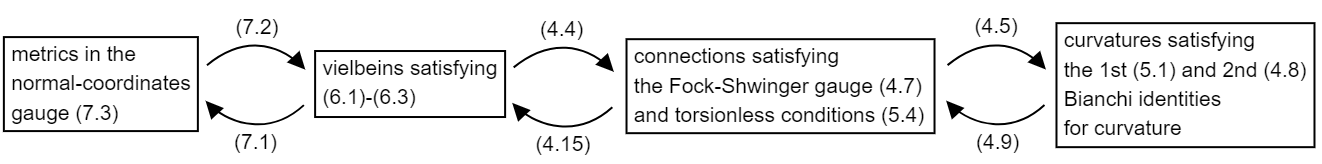}
    \caption{A summary of the proof, with references to equation numbers.}
    \label{psumm}
\end{figure}

\section{Conventions and Basics \label{conv}}
Let $x^\mu$ denote the coordinates. We adopt Greek indices for the components of tensors in the coordinate basis:
\begin{equation}
    T^{\alpha_1...\alpha_p}_{\beta_1...\beta_q}\equiv T(dx^{\alpha_1},...,dx^{\alpha_p},\partial_{\beta_1},...,\partial_{\beta_q})
\end{equation}
%($\partial_{\beta}\equiv\frac{\partial}{\partial x^\beta}$). 
We adopt Wald's \cite{wald} (appendix B) conventions for the standard operations on differential forms:
\begin{alignat}{3}
&\text{exterior derivative:}\quad(d\omega)_{\mu_1...\mu_{p+1}}&&=(p+1) \partial_{[\mu_1}\omega_{\mu_2...\mu_{p+1}]} \\
&\text{wedge product:}\quad (\omega\wedge \sigma)_{\mu_1...\mu_{p+q}}&&=\frac{(p+q)!}{p!q!}\omega_{[\mu_1...\mu_p}\sigma_{\mu_{p+1}...\mu_{p+q}]} \\
&\text{interior product:}\quad (i_{X}\omega)_{\mu_2...\mu_{p}}&&=X^{\mu_1}\omega_{\mu_1\mu_2...\mu_p}
\end{alignat}
where antisymmetrization is defined in such a way that $\omega_{\mu_1...\mu_n}=\omega_{[\mu_1...\mu_n]}$.

The Lie derivative is defined as
\begin{equation}
    (\mathcal L_XT)_{\mu_1...\mu_p}=X(T_{\mu_1...\mu_p})-\sum_{i=1}^p T(\partial_{\mu_1},...,\partial_{\mu_{i-1}},[X,\partial_{\mu_i}],\partial_{\mu_{i+1}},...,\partial_{\mu_p})
\end{equation}
and satisfies
\begin{align}
    \mathcal L_X&=\{d,i_X\}. \label{Liecom} 
\end{align}
As a result,
\begin{equation}
    [\mathcal L_X,d]=0. \label{LieAncom}
\end{equation}
%As a result $d\mathcal L_X=\mathcal L_X d$.
We define the \emph{radial vector field}
\begin{equation}
\mathfrak r\equiv x^\mu\partial_\mu~. \label{rvf}
\end{equation}
Since $[\mathfrak r,\partial_\mu]=-\partial_\mu$
\begin{equation}
(\mathcal L_{\mathfrak r} \sigma)_{\mu_1...\mu_p}=(\mathfrak r+p)\sigma_{\mu_1...\mu_p} \label{Lie}
\end{equation}
for any $p$-form $\sigma$.

If $\eta$ is a $p$-form, less singular than $|x|^{-p}$ near $x=0$, then
\begin{equation}
    \mathcal L_{\mathfrak r}\eta=0 \Longleftrightarrow \eta=0. \label{uid}
\end{equation}
This follows from the fact that the l.h.s. of \eqref{uid} is equivalent to (using \eqref{Lie})
%This can be seen by applying \eqref{Lie}, rescaling $x$, multiplying by $t^{p-1}$ and rewriting this as 
\begin{equation}
    \frac{d}{dt}(t^p\eta_{\mu_1...\mu_p}(tx))=0
\end{equation}
which integrates to $\eta=0$ assuming $\eta$ is less singular than $|x|^{-p}$ near $x=0$.

%To prove a bijection $\mathcal A\longleftrightarrow \mathcal B$, first one must know the maps that take you forward and backward. Let these be $f$ and $g$ respectively. There are four steps to establish the bijection: prove the following
%\begin{align*}
%    &f(\mathcal A)\subset \mathcal B\\
%    &g(\mathcal B)\subset \mathcal A\\
%    &g(f(e))=e,\quad \forall e\in \mathcal A\\
%    &f(g(e))=e,\quad \forall e\in \mathcal B
%\end{align*}

\section{The Durand-Mendel result in Yang-Mills \label{YM}}
In this section we review the results of \cite{durand} which introduces several of the notions relevant for the latter discussions on gravity.

Connections in Yang-Mills (YM) theory are Lie algebra valued 1-forms $A$. Without loss of generality we may assume that the Lie algebra is a matrix algebra. The wedge product of matrix valued forms, of degree $p$ and $q$, is defined as
\begin{equation}                [\omega\wedge\eta]^a_b:=\omega^a_c\wedge\eta^c_b+(-)^{pq+1}\eta^a_c\wedge\omega^c_b
\end{equation}
(where $a,b$ and $c$ are the `matrix' indices).
%$A^a_b$ 
%($a$ and $b$ being the `matrix' indices). 
Consider the \emph{Fock-Schwinger gauge} \cite{cronstrom, kummer}
\begin{equation}
    i_{\mathfrak r}A=0 \label{gc}
\end{equation}
where $\mathfrak r$ is the radial vector field \eqref{rvf}. 
%(As pointed out in [xxx] this is a completely fixed gauge.)
This is a completely fixed gauge\footnote{Which means that no infinitesimal gauge transformation preserves the gauge condition. To see this, note that an infinitesimal gauge transformation $A\to A+i[A,\omega] -\frac{1}{g}d\omega$ (where the matrix $\omega(x)$ is the infinitesimal gauge transformation parameter) preserves the gauge condition iff $\frac{d}{dt}\omega(tx)=0$. But since $\omega(x)=0$ when $|x|=\infty$ (this is because bona fide gauge transformations must approach the identity at infinity), we must have $\omega(x)=0$.}.
When we map connections in this gauge to field strengths via 
\begin{equation}
F=dA+\frac{ig}{2}[A \wedge A] \label{FS}
\end{equation}
there are two questions which arise: 
\begin{enumerate}
    \item is the map injective?
    \item what is the image of the map?
\end{enumerate}
The answer to Q1 is yes. To see this, apply $i_{\mathfrak r}$ to \eqref{FS} and make use of \eqref{Liecom} and \eqref{gc}:
\begin{equation}
    i_{\mathfrak r} F=\mathcal L_{\mathfrak r} A. \label{int}
\end{equation}
Using \eqref{Lie} this becomes
\begin{equation}
x^\mu F_{\mu\nu}= (\mathfrak r+1)A_{v}
\end{equation}
which is easily integrated, assuming the connection is not singular at $x=0$, to give
\begin{equation}
    A_{\mu}(x)=\int_0^1 tx^\nu F_{\nu\mu}(tx) dt. \label{RS}
\end{equation}
(Note that $i_{\mathfrak r} A=0$ automatically from above.) To answer Q2 we first note that the Bianchi identity for curvature
\begin{equation}\label{BI}
\begin{split}
    &dF+ig[A\wedge F]=0, \\
    \text{where}\quad&A_{\mu}(x)=\int_0^1 tx^\nu F_{\nu\mu}(tx) dt
\end{split}
\end{equation}
%\begin{equation}
%    dF+igA\bland F-igF\bland A=0 \label{BI}
%\end{equation}
%(where $A$ in this equation is understood as being given by \eqref{RS} -- so that this equation is a condition on field strengths alone) 
is a necessary condition for a field strength to lie in the image. But it is also a sufficient condition, which is seen by applying $i_{\mathfrak r}$ to $\eqref{BI}$ and using \eqref{LieAncom} and \eqref{int}:
\begin{equation}\label{LeqF}
\begin{split}
&\mathcal L_{\mathfrak r}(F-dA-\frac{ig}{2}[A\wedge A] )=0, \\
\text{where}\quad&A_{\mu}(x)=\int_0^1 tx^\nu F_{\nu\mu}(tx) dt.
\end{split}
\end{equation}
From \eqref{uid}, we find that this is equivalent to 
\begin{equation}
\begin{split}
    &F=dA+\frac{ig}{2}[A\wedge A] \\
    \text{where}\quad&A_{\mu}(x)=\int_0^1 tx^\nu F_{\nu\mu}(tx) dt,
\end{split}
\end{equation}
This shows that if $F$ satisfies the Bianchi identity for curvature then it is indeed in the image of the curvature map restricted to connections in the Fock-Shwinger gauge. 
This completes the proof that the space of connections in YM satisfying the Fock-Shwinger gauge condition is bijective with the space of field strengths satisfying the Bianchi identity for curvature.

\section{The Müller-Schubert-van de Ven result in gravity \label{MSV}}
In this section we review the results of \cite{muller}, crucial for the latter developments.

\subsection{Vielbein formalism \label{gr}}
See \cite{jost} for the basics of the vielbein formalism. 

Consider the vielbein 1-forms $e^a$, and their vector duals $e_a$, satisfying
\begin{align}
    g(e_a,e_b)&=\delta_{ab} \\
    e^a(e_b)&=\delta^a_b
\end{align} 
(Note that we have taken the metric signature to be Euclidean because, ultimately, what we have in mind is to perform the change of variables on the spatial metric.)
In the coordinate basis, $e^a_\mu$ and $e_a^\mu$ are matrix inverses of each other so they satisfy
\begin{equation}
    e^a_\mu e_a^\nu=\delta^\nu_\mu
\end{equation}
We use Roman indices for the vielbein basis and Greek indices for the coordinate basis. The vielbeins obviously possess an $SO(d)$ gauge freedom (in $d$ dimensions). Assuming the torsion is zero, the spin-connection and curvature 2-form are defined as
\begin{align}
\omega^a_b &= \frac{1}{2}(-e^c\wedge i_bi_ade^c+i_bde^a-i_ade^b), \label{sc}  \\
R^a_b &= d\omega^a_b+\omega^a_c\wedge \omega^c_b  \label{c}
\end{align}
respectively, where $i_a$ is the interior product w.r.t. $e_a$. 

We should point out that \eqref{sc} is equivalent to the vanishing of the torsion:
\begin{equation}
de^a+\omega^a_b\wedge e^b=0. \label{tc}
\end{equation}
To prove this use the fact that $e^b\wedge i_b(\eta)=p\eta$ for any $p$-form $\eta$.
To see that \eqref{tc} implies \eqref{sc} use $de^a=-\omega^a_b\wedge e^b$ to simplify the r.h.s. of \eqref{sc}.
%We will refer to this as the \emph{torsionless condition}.

\subsection{Recovering the spin-connection from the curvature \label{ccb}}

Impose the Fock-Shwinger gauge condition on the connection:
\begin{equation}
i_{\mathfrak r}\omega^a_{b}=0.
\end{equation}
This fixes the $SO(d)$ gauge freedom completely. 
Then, in just the same way as in \S\ref{YM}, we can prove the following:
the space of spin-connections satisfying the Fock-Shwinger gauge condition is in bijection with the space of curvatures satisfying the \emph{2\textsuperscript{nd} Bianchi identity for curvature}:
\begin{equation}\label{2nd}
\begin{split}
     &dR^a_b+\omega^a_c\wedge R^c_b-R^a_c\wedge \omega^c_b=0, \\
     \text{where}\quad&\omega^a_{b\mu}(x)=\int_0^1 tx^\nu R^a_{b\nu\mu}(tx) dt.
\end{split}
\end{equation}
The forward map for this bijection is given by \eqref{c}; and the reverse map is given by
\begin{equation}
\omega^a_{b\mu}(x)=\int_0^1 tx^\nu R^a_{b\nu\mu}(tx) dt. \label{rm}
\end{equation}
%Notice that $i_r\omega^a_b=0$ automatically from above.
%Since $\mathcal L_r =\{d,i_r\}$ we have
%\begin{equation}
%i_r R^a_b= \mathcal L_r\omega^a_b \tag{5}
%\end{equation}
%Therefore
%\begin{equation}
%x^\mu R^a_{b\mu\nu}= (r+1)\omega^a_{vb} %\tag{6}
%\end{equation}

%\subsection{Claim: curvatures satisfying the 2nd Bianchi identity are bijective with connections satisfying the gauge condition}
%Consider the space of curvatures satisfying the 2nd Bianchi identity $dR^a_b+\omega^a_c\wedge R^c_b-R^a_c\wedge \omega^c_b=0$ where $\omega$ in this expression is given by (7). Taking the interior product and using (5) gives
%\begin{equation}
%\mathcal L_r(R^a_b-d\omega^a_b-%\omega^a_c\wedge \omega^c_b )=0 \tag{8}
%\end{equation}
%which implies $R^a_b=d\omega^a_b+\omega^a_c\wedge \omega^c_b$ (where $\omega$ in this expression given by (5)). This proves the claim.

\subsection{Recovering the vielbein from the spin-connection \label{cvc}}

%Start with (3):
%\begin{equation}
%de^a+\omega^a_b\wedge e^b=0 \tag{9}
%\end{equation}
Applying $i_{\mathfrak r}$ to \eqref{tc} gives
\begin{equation}
-di_{\mathfrak r}e^a+(\mathfrak r+1) e^a-\omega^a_b \wedge i_{\mathfrak r} e^b=0. \label{10}
\end{equation}
Assume we are working in \emph{`normal-coordinates'}, i.e. a coordinate system in which
\begin{equation}
    x^\mu g_{\mu\nu}(x)=x^\nu \label{cs}
\end{equation}
(note that this implies $g_{\mu\nu}(0)=\delta_{\mu\nu}$). As a result,
\begin{align}
x^\mu e^a_\mu(x)&=x^a \\
x^a e^a_\mu(x)&=x^\mu
\end{align}
(note that this implies $e^a_\mu(0)=\delta^a_\mu$).
So \eqref{10} becomes
\begin{equation}
(\mathfrak r +1)e^a_\mu=\delta^a_\mu+\omega^a_{b\mu}x^b 
\end{equation}
which is easily integrated to give
\begin{equation}
e^a_\mu(x)=\delta^a_\mu+\int_0^1 \omega^a_{b\mu}(tx) tx^b dt.  \label{ewc}
\end{equation}
Using \eqref{rm} then gives
\begin{equation}
e^a_\mu(x)=\delta^a_\mu+\int_0^1\int_0^1 t_1t_2^2x^b x^\nu R^a_{b\nu\mu}(t_1t_2x) dt_1 dt_2 
\end{equation}
Fixing $t_1t_2$ and doing the $t_2$ integral first gives, finally,
\begin{equation}
e^a_\mu(x)=\delta^a_\mu+\int_0^1 t(1-t)x^b x^\nu R^a_{b\nu\mu}(tx) dt. \label{lrm}
\end{equation}
Note that $x^\mu e^a_\mu(x)=x^a$ and $x^a e^a_\mu(x)=x^\mu$ follow automatically from this.

\section{Bijection between the curvature and spin-connection \label{cl1}}
%\section{\texorpdfstring{Proof: Curvatures satisfying the 1\textsuperscript{st} and 2\textsuperscript{nd} Bianchi identities for curvature are bijective with connections satisfying the Fock-Shwinger gauge and torsionless conditions}{} \label{cl1}}

\begin{theorem}
    Curvatures satisfying the 1\textsuperscript{st} and 2\textsuperscript{nd} Bianchi identities for curvature are bijective with spin-connections satisfying the Fock-Shwinger gauge and torsionless conditions. \label{th1}
\end{theorem}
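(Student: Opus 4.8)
The plan is to build on the bijection already established in \S\ref{ccb} and then check that the two remaining constraints line up. Recall that \S\ref{ccb} gives a bijection between spin-connections $\omega^a_b$ obeying the Fock-Schwinger gauge $i_{\mathfrak r}\omega^a_b=0$ and curvature 2-forms $R^a_b$ obeying the 2\textsuperscript{nd} Bianchi identity for curvature \eqref{2nd}, with forward map \eqref{c} and reverse map \eqref{rm}. Since those two conditions are already accounted for on their respective sides, it suffices to show that a Fock-Schwinger $\omega^a_b$, with curvature $R^a_b$ given by \eqref{c}, is \emph{torsionless} precisely when its curvature also obeys the 1\textsuperscript{st} Bianchi identity for curvature. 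Throughout I read ``$\omega^a_b$ is torsionless'' as: the vielbein $e^a$ reconstructed from $\omega^a_b$ through \eqref{ewc} — equivalently, directly from $R^a_b$ through \eqref{lrm} — has vanishing torsion, $T^a:=de^a+\omega^a_b\wedge e^b=0$; by the equivalence of \eqref{sc} and \eqref{tc} this is the same as $\omega^a_b$ being the spin-connection of that vielbein.

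Two ingredients will carry the argument. First, I would pin down $i_{\mathfrak r}T^a$ for the reconstructed $e^a$. Applying $i_{\mathfrak r}$ to \eqref{ewc} and using the gauge condition gives $i_{\mathfrak r}e^a=x^a$, so the reconstructed vielbein automatically sits in normal coordinates; and applying $i_{\mathfrak r}$ to $T^a$ directly, using \eqref{Liecom} and $i_{\mathfrak r}\omega^a_b=0$, gives $i_{\mathfrak r}T^a=(\mathfrak r+1)e^a-d(i_{\mathfrak r}e^a)-\omega^a_b\,(i_{\mathfrak r}e^b)$. Substituting $i_{\mathfrak r}e^a=x^a$ together with the relation $(\mathfrak r+1)e^a_\mu=\delta^a_\mu+\omega^a_{b\mu}x^b$ that \eqref{ewc} solves, the right-hand side vanishes, so $i_{\mathfrak r}T^a=0$ (this is the left-hand side of \eqref{10}, now established without presupposing vanishing torsion). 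Second, I would record the Cartan first Bianchi identity: differentiating $T^a$ and substituting \eqref{c} for $d\omega^a_b$ and $de^b=T^b-\omega^b_c\wedge e^c$, the cubic-in-$\omega$ terms cancel and one is left with
\begin{equation}
    dT^a+\omega^a_b\wedge T^b=R^a_b\wedge e^b. \label{eq:T-bianchi}
\end{equation}

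Given these, both directions are short. For \emph{sufficiency}, assume $R^a_b\wedge e^b=0$; then \eqref{eq:T-bianchi} reduces to $dT^a+\omega^a_b\wedge T^b=0$, and acting on it with $i_{\mathfrak r}$ — using $i_{\mathfrak r}\omega^a_b=0$, $i_{\mathfrak r}T^a=0$ and \eqref{Liecom} — collapses it to $\mathcal L_{\mathfrak r}T^a=0$, whence $T^a=0$ by the uniqueness lemma \eqref{uid} (the torsion is a 2-form, regular at the origin). So $\omega^a_b$ is torsionless. For \emph{necessity}, if $\omega^a_b$ is torsionless then $T^a=0$ and \eqref{eq:T-bianchi} immediately gives $R^a_b\wedge e^b=0$. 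Combining with the \S\ref{ccb} bijection, the maps \eqref{c} and \eqref{rm} then restrict to a bijection between curvatures satisfying the 1\textsuperscript{st} and 2\textsuperscript{nd} Bianchi identities for curvature and spin-connections satisfying the Fock-Schwinger gauge and torsionless conditions, as claimed.

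I expect the main obstacle to be the sufficiency direction — showing that the 1\textsuperscript{st} Bianchi identity, a constraint written purely in terms of $R^a_b$ and the algebraically reconstructed $e^a$, is strong enough to force the torsion of the pair $(e^a,\omega^a_b)$ to vanish identically. Everything rests on feeding $i_{\mathfrak r}\omega^a_b=0$ and $i_{\mathfrak r}T^a=0$ into \eqref{eq:T-bianchi}, and of these the latter has to be squeezed out of the explicit integral form \eqref{ewc}; I anticipate that this verification, along with the sign bookkeeping behind \eqref{eq:T-bianchi}, is the only genuinely computational part, the endgame $\mathcal L_{\mathfrak r}T^a=0\Rightarrow T^a=0$ being the same uniqueness step already used in \S\ref{YM} and \S\ref{ccb}.
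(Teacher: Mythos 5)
Your proof is correct and follows essentially the same route as the paper: reduce to the \S\ref{ccb} bijection and show that, for Fock--Schwinger connections with $e^a$ reconstructed via \eqref{ewc}, the 1\textsuperscript{st} Bianchi identity is equivalent to the torsionless condition by applying $i_{\mathfrak r}$ and invoking the uniqueness lemma \eqref{uid}. Your explicit use of the Cartan identity $dT^a+\omega^a_b\wedge T^b=R^a_b\wedge e^b$ together with $i_{\mathfrak r}T^a=0$ is exactly the ``manipulation'' the paper's proof leaves implicit, and it supplies both directions of the equivalence cleanly.
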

\begin{proof}
We pointed out in \S\ref{ccb} that the space of curvatures satisfying the 2\textsuperscript{nd} Bianchi identity for curvature is in bijection with the space of connections satisfying $i_{\mathfrak r}\omega^a_b=0$. 
We now place the \emph{1\textsuperscript{st} Bianchi identity for curvature} 
\begin{equation}\label{1st}
\begin{split}
    &R^a_b\wedge e^b=0, \\
    \text{where}\quad&e^a_\mu(x)=\delta^a_\mu+\int_0^1 t(1-t)x^b x^\nu R^a_{b\nu\mu}(tx) dt
\end{split}
\end{equation}
as an additional restriction on the space of curvatures. 
Via the bijection, this descends into an extra restriction on the space of connections:
\begin{equation}
\begin{split}
    &R^a_b\wedge e^b=0, \\
    \text{where}\quad &R^a_b = d\omega^a_b+\omega^a_c\wedge \omega^c_b \\
    \text{and}\quad&e^a_\mu(x)=\delta^a_\mu+\int_0^1 \omega^a_{b\mu}(tx) tx^b dt.
\end{split}
\end{equation}
Applying $i_{\mathfrak r}$ to this and manipulating gives
\begin{equation}
\begin{split}
&\mathcal L_{\mathfrak r}(de^a+\omega^a_b \wedge e^b)=0,\\
\text{where}\quad &e^a_\mu(x)=\delta^a_\mu+\int_0^1 \omega^a_{b\mu}(tx) tx^b dt
\end{split}
\end{equation}
which implies (using \eqref{uid}) that
\begin{equation}\label{tsc}
\begin{split}
     &de^a+\omega^a_b \wedge e^b=0,\\
     \text{where}\quad&e^a_\mu(x)=\delta^a_\mu+\int_0^1 \omega^a_{b\mu}(tx) tx^b dt. 
\end{split}
\end{equation} 
We will refer to \eqref{tsc} as the \emph{torsionless conditon} on connections (c.f. \eqref{tc}).

\end{proof}

\section{Bijection between the spin-connection and vielbein \label{cl2}}
%\section{\texorpdfstring{Proof: Connections satisfying the Fock-Shwinger gauge and torsionless conditions are bijective with vielbeins satisfying \eqref{one}-\eqref{three}}{} \label{cl2}}

\begin{theorem}
    Spin-connections satisfying the Fock-Shwinger gauge and torsionless conditions are bijective with vielbeins satisfying \eqref{one}-\eqref{three}.\label{th2}
\end{theorem}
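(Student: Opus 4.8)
The plan is to write down the two maps explicitly and to check that each is well defined on the stated domain and that the round trips are trivial. The map from vielbeins to spin-connections is the torsion-free formula \eqref{sc}, and the candidate inverse is the integral formula \eqref{ewc}, $e^a_\mu(x)=\delta^a_\mu+\int_0^1\omega^a_{b\mu}(tx)\,tx^b\,dt$, i.e.\ the expression whose vanishing-torsion version is \eqref{tsc}. Both are manifestly defined --- \eqref{sc} on any (invertible, regular) vielbein and \eqref{ewc} on any $\mathfrak{so}(d)$-valued $1$-form regular at the origin --- so what must be proved is (i) that \eqref{sc} sends a vielbein obeying \eqref{one}--\eqref{three} to a spin-connection in the Fock--Schwinger gauge obeying \eqref{tsc}; (ii) that \eqref{ewc} sends such a spin-connection back to a vielbein obeying \eqref{one}--\eqref{three}; and (iii) that the two compositions are the identity.

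For (i): since $\omega$ is given by \eqref{sc}, the ordinary torsion-free identity \eqref{tc}, $de^a+\omega^a_b\wedge e^b=0$, holds automatically (the equivalence \eqref{sc}$\Leftrightarrow$\eqref{tc} is recorded in \S\ref{gr}). Applying $i_{\mathfrak r}$ to \eqref{tc} and using \eqref{Liecom} and \eqref{Lie}, exactly as in the derivation of \eqref{10} but now \emph{retaining} the term $(i_{\mathfrak r}\omega^a_b)e^b_\mu$ that was dropped there, one gets $(\mathfrak r+1)e^a_\mu=\delta^a_\mu+\omega^a_{b\mu}x^b-(i_{\mathfrak r}\omega^a_b)e^b_\mu$, having used the normal-coordinates content of \eqref{one}--\eqref{three} ($i_{\mathfrak r}e^a=x^a$ and $e^a_\mu(0)=\delta^a_\mu$). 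The Fock--Schwinger content of \eqref{one}--\eqref{three} --- which states precisely $i_{\mathfrak r}\omega^a_b=0$ for this $\omega$ --- annihilates the last term, and integrating the resulting linear radial equation out from the origin reproduces \eqref{ewc}; hence \eqref{tc} \emph{is} \eqref{tsc} for this $\omega$. (Equivalently, feed $\mathcal L_{\mathfrak r}(e^a-\tilde e^a)=0$, with $\tilde e$ the right-hand side of \eqref{ewc}, into \eqref{uid}.) This also settles half of (iii): starting from a vielbein and going round, $e\mapsto\omega[e]\mapsto e$.

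For (ii): from $\omega$ in the Fock--Schwinger gauge satisfying \eqref{tsc}, define $e$ by \eqref{ewc}. Then $e^a_\mu(0)=\delta^a_\mu$ is immediate; contracting with $x^\mu$ and using $x^\mu\omega^a_{b\mu}(tx)=\tfrac{1}{t}(i_{\mathfrak r}\omega^a_b)(tx)=0$ gives $x^\mu e^a_\mu=x^a$; contracting with $x^a$ and using that $\omega^a_b$ is $\mathfrak{so}(d)$-valued (antisymmetric in $a,b$) gives $x^a e^a_\mu=x^\mu$; and \eqref{tsc} says $(e,\omega)$ is torsion-free, so by the converse half of the equivalence in \S\ref{gr} this same $\omega$ is recovered from $e$ by \eqref{sc}, whence the Fock--Schwinger content of \eqref{one}--\eqref{three} holds because $i_{\mathfrak r}\omega^a_b[e]=i_{\mathfrak r}\omega^a_b=0$. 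This last remark is also the other half of (iii), $\omega\mapsto e\mapsto\omega[e]=\omega$. The step I expect to be the main obstacle is the promotion in (i) of the \emph{generic} torsion-free identity \eqref{tc} --- free of charge once $\omega$ is \eqref{sc} --- to the \emph{rigid} torsionless condition \eqref{tsc}, whose `where' clause ties $e$ to the radial integral of $\omega$; this is the one place where all of \eqref{one}--\eqref{three} are used at once (normal coordinates to run the $i_{\mathfrak r}$ reduction and the radial integration, Fock--Schwinger to discard $(i_{\mathfrak r}\omega^a_b)e^b_\mu$), and one must check that the forms in play are less singular than the relevant power of $|x|$ near $x=0$ so that the integration --- equivalently \eqref{uid} --- applies.
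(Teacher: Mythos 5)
Your proposal is correct and follows essentially the same route as the paper: forward map \eqref{sc}, inverse map \eqref{ewc}, with the key observation that under \eqref{one}--\eqref{two} the radial contraction of \eqref{sc} reduces to $i_{\mathfrak r}\omega^a_b=\tfrac{1}{2}(i_a\mathcal L_{\mathfrak r}e^b-i_b\mathcal L_{\mathfrak r}e^a)$, so that \eqref{three} is exactly the Fock--Schwinger condition, and with the round trips handled via the \eqref{tc}$\Leftrightarrow$\eqref{sc} equivalence and the radial integration. You in fact spell out more than the paper does (the forward direction is left there as an exercise to the reader), and your identification of where the argument could break --- promoting \eqref{tc} to the rigid \eqref{tsc} and checking regularity at the origin for \eqref{uid} --- is exactly the right place to be careful.
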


\begin{proof}
Consider the space of vielbeins satisfying 
\begin{align}
x^\mu e^a_\mu(x)&=x^a \label{one} \\
x^a e^a_\mu(x)&=x^\mu \label{two} \\
i_a \mathcal L_{\mathfrak r} e^b&=i_b \mathcal L_{\mathfrak r} e^a \label{three}
\end{align} 
%(Note that \eqref{one}-\eqref{two} imply $e^a_\mu(0)=\delta^a_\mu$.) 
We leave it as an exercise to the reader to show that connections arising from such vielbeins satisfy the Fock-Shwinger gauge and torsionless conditions automatically. 

Conversely, consider the image of connections satisfying the Fock-Shwinger gauge and torsionless conditions under the map \eqref{ewc}. 
It is immediately clear that such vielbeins satisfy \eqref{one}-\eqref{two}. But, in addition, such vielbeins satisfy \eqref{three}, which can be proved as follows: 
we pointed out, in \S\ref{gr}, that \eqref{tc} is equivalent to \eqref{sc}. This means that the torsionless condition is equivalent to 
\begin{equation}\label{yuz}
\begin{split}
        &\omega^a_b = \frac{1}{2}(-e^c\wedge i_bi_ade^c+i_bde^a-i_ade^b)\\
        \text{where}\quad &e^a_\mu(x)=\delta^a_\mu+\int_0^1 \omega^a_{b\mu}(tx) tx^b dt.
\end{split}
\end{equation}
Applying $i_{\mathfrak r}$ to this and using $x^a \wedge e^a = x^\mu dx^\mu$ shows that \eqref{three} is indeed satisfied.

From \eqref{yuz}, it is clear that mapping a connection -- satisfying the torsionless condition -- to a vielbein (via \eqref{ewc}) and back is equivalent to the identity operation. 
\end{proof}

\section{Bijection between the vielbein and metric \label{cl3}}
%\section{\texorpdfstring{Proof: Vielbeins satisfying \eqref{one}-\eqref{three} are bijective with metrics in normal coordinates}{} \label{cl3}}

\begin{theorem}
    Vielbeins satisfying \eqref{one}-\eqref{three} are bijective with metrics in normal coordinates. \label{th3}
\end{theorem}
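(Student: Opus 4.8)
The plan is to realise the bijection concretely, taking the forward map to be the squaring map $\Phi:e\mapsto g$ with $g_{\mu\nu}:=\delta_{ab}e^{a}_{\mu}e^{b}_{\nu}$, and constructing its inverse as a radial reconstruction of the vielbein from the metric. First I would check that $\Phi$ sends our set into metrics in normal coordinates. Contracting $g_{\mu\nu}=\delta_{ab}e^{a}_{\mu}e^{b}_{\nu}$ with $x^{\mu}$ and using \eqref{one} and then \eqref{two} gives $x^{\mu}g_{\mu\nu}=x^{\nu}$, i.e.\ the condition \eqref{cs}; positivity and smoothness of $g$ are immediate, and \eqref{one} already forces $e^{a}_{\mu}(0)=\delta^{a}_{\mu}$ (set $x=t\xi$ and let $t\to0$), so $g_{\mu\nu}(0)=\delta_{\mu\nu}$ as required.

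Next, injectivity of $\Phi$ on our set. If $e,e'$ both satisfy \eqref{one}--\eqref{three} with $\Phi(e)=\Phi(e')$, then writing them as matrices $E,E'$ (with $E^{a}{}_{\mu}=e^{a}_{\mu}$) one has $E'=OE$ for some orthogonal-matrix-valued function $O$, with $O(0)=\mathbb{1}$ by the previous paragraph. Rewritten with \eqref{Lie}, condition \eqref{three} says precisely that the matrix $(\mathfrak r E)E^{-1}$ --- whose $(b,a)$ entry is $e_{a}^{\mu}\,\mathfrak r e^{b}_{\mu}$ --- is symmetric. Since $(\mathfrak r O)O^{T}$ is antisymmetric (differentiate $OO^{T}=\mathbb{1}$) while conjugation by $O$ preserves symmetry, the decomposition $(\mathfrak r E')(E')^{-1}=(\mathfrak r O)O^{T}+O\,(\mathfrak r E)E^{-1}O^{T}$ can be symmetric only if $(\mathfrak r O)O^{T}=0$, i.e.\ $\mathfrak r O=0$; with $O(0)=\mathbb{1}$ this gives $O\equiv\mathbb{1}$ and hence $e=e'$. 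This is also the calculation that explains the role of \eqref{three}: conditions \eqref{one}--\eqref{two} already reduce the $SO(d)$ redundancy of the vielbein to rotations fixing the radial direction, and \eqref{three} removes precisely what remains.

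For surjectivity I would reconstruct the vielbein from a given normal-coordinates metric $g$ by a radial flow. Differentiating $g=E^{T}E$ along $\mathfrak r$ shows that \emph{every} vielbein for $g$ satisfies $\tfrac12\big[(\mathfrak r E)E^{-1}+\big((\mathfrak r E)E^{-1}\big)^{T}\big]=\tfrac12(E^{-1})^{T}(\mathfrak r g)E^{-1}$; imposing in addition that $(\mathfrak r E)E^{-1}$ be symmetric (condition \eqref{three}) collapses this to the first-order radial ODE
\[
\frac{dE}{dt}=\tfrac12\,(E^{-1})^{T}\frac{dg}{dt},\qquad E|_{x=0}=\mathbb{1},
\]
along each ray $x=t\xi$. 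I would then check that this ODE is well posed and that its solution does what is wanted: its right-hand side is smooth wherever $E$ is invertible; the quantity $E^{T}E-g$ is conserved along the flow and vanishes at the origin, so $E^{T}E=g$ throughout (whence $\det E=\sqrt{\det g}>0$ and $E$ is bounded on compacta, giving a global solution on the domain of $g$); and because $g(t\xi)\xi=\xi$ along the ray forces $(\mathfrak r g)\xi=0$, one gets $\mathfrak r(E\xi)=0$, hence $E\xi=\xi$, which is \eqref{one}--\eqref{two}, while \eqref{three} holds by construction. Thus $e:=E$ lies in our set and has metric $g$, so this map is a two-sided inverse of $\Phi$; together with Theorems \ref{th1} and \ref{th2} this closes the chain down to the curvature.

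I expect the main obstacle to be the well-posedness of this radial reconstruction at the origin: there the ``equation'' $\mathfrak r E=\tfrac12(E^{-1})^{T}\mathfrak r g$ degenerates in the $x$-variable, so --- exactly as with \eqref{uid} --- one must pass to the rescaled parameter $t$, check that the induced Cauchy data are smooth, and confirm that $E^{T}E=g$ and \eqref{one}--\eqref{two} are genuinely \emph{propagated} by the flow rather than merely imposed at $t=0$. The algebraic identity isolating the symmetric part of $(\mathfrak r E)E^{-1}$ is the conceptual crux; the rest is bookkeeping of the kind already done in \S\ref{cvc}.
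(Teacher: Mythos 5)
Your proposal is correct and follows essentially the same route as the paper: the forward map is the squaring map \eqref{me}, and the inverse is the radial ODE \eqref{egc} obtained by combining $\mathfrak r$-differentiation of $g=E^{T}E$ with the symmetry condition \eqref{three}, with \eqref{one}--\eqref{two} propagated along rays exactly as in the paper's argument. Your explicit orthogonal-matrix argument for injectivity and the conservation of $E^{T}E-g$ along the flow are points the paper merely asserts or leaves as exercises, but they are the same underlying mechanism.
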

\begin{proof}
The metric is obtained from the vielbein using
\begin{equation}
    g_{\mu\nu}=e^a_\mu e^a_\nu \label{me}
\end{equation}
Conversely, to obtain the vielbein from the metric,
%One has to fix the $SO(d)$ gauge freedom; and \eqref{three} is exactly that gauge fixing condition. To see this, 
we differentiate \eqref{me} and use \eqref{three} (which, using \eqref{Lie}, can be shown to be equivalent to $e^a_{[\mu
}(tx)\frac{d}{dt}e^a_{\nu]}(tx)=0$, and which completely fixes the $SO(d)$ gauge freedom the vielbeins possess):
\begin{equation}
\frac{d}{dt} e^a_\mu(tx)=\frac{1}{2}e^\nu_a(tx) \frac{d}{dt} g_{\nu\mu}(tx). \label{egc}
\end{equation}
Given the metric, this can be integrated uniquely for the vielbeins (a fact essentially guaranteed by the Picard-Lindelöf theorem \cite{Teschl} (\S2.2)) using the initial condition $e^a_\mu(0)=\delta^a_\mu$ (note that this initial condition is consistent with \eqref{one}-\eqref{two}). This is the forward map, taking metrics to vielbeins. 
%The reverse map, taking vielbeins to metrics, is obviously \eqref{me}.

Metrics in normal coordinates satisfy (see \eqref{cs})
\begin{equation}
x^\mu g_{\mu\nu}(x)=x^\nu \label{nc}
\end{equation}
%(note that this implies $g_{\mu\nu}(0)=\delta_{\mu\nu}$). 
From \eqref{me} it is clear that vielbeins satisfying \eqref{one}-\eqref{two} map to metrics in normal coordinates.
The question is whether metrics in normal coordinates map to vielbeins satisfying \eqref{one}-\eqref{three}. 
%It is clear from \eqref{egc} that \eqref{three} holds immediately. 
\eqref{three} is easily shown by multiplying \eqref{egc} by $e_b^\mu(tx)$ and using the symmetry of the metric. 
%fairly easy to show using \eqref{egc} so we leave it an an exercise to the reader.

From \eqref{egc} we have
\begin{equation}
\frac{d}{dt}(tx^\mu e^a_\mu(tx))=x^\mu e^a_\mu(tx)
\end{equation}
which implies 
\begin{equation}
    x^\mu e^a_\mu(tx)=\text{const.}=x^a \label{tx}
\end{equation}
proving \eqref{one}. 

\eqref{tx} implies $x^ae^\mu_a(x)=x^\mu$, which together with \eqref{egc} gives 
\begin{equation}
\frac{d}{dt}(tx^ae^a_\mu(tx))=x^ae^a_\mu(tx)
\end{equation}
which implies 
\begin{equation}
    x^ae^a_\mu(tx)=\text{const.}=x^\mu
\end{equation}
proving \eqref{two}. 

We leave it as an simple exercise to the reader to convince themselves that \eqref{me} and \eqref{egc} are indeed inverses of each other when restricted to the subspaces in question.
\end{proof}

\section{Conclusions \label{conc}}

Combining Theorems \ref{th1}, \ref{th2} and \ref{th3} finally proves that
\begin{theorem}
The space of curvatures satisfying the 1\textsuperscript{st} and 2\textsuperscript{nd} Bianchi identities for curvature is bijective with the space of metrics in normal coordinates.
\end{theorem}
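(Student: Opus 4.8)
The plan is to obtain the statement as a direct composition of Theorems \ref{th1}, \ref{th2} and \ref{th3}. Each of these asserts a bijection $\mathcal A\leftrightarrow\mathcal B$ in which the target class $\mathcal B$ of one theorem is \emph{verbatim} the source class of the next; hence the composite of the three forward maps is a well-defined map from curvatures to metrics, the composite of the three reverse maps (taken in the opposite order) is its two-sided inverse, and a composition of bijections is a bijection. So the only real work is to line up the intermediate spaces and, for later use, to write the composite maps explicitly.

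First I would record the chain of intermediate spaces and check that consecutive ones coincide. Theorem \ref{th1}: curvatures $R^a_b$ satisfying the 1\textsuperscript{st} and 2\textsuperscript{nd} Bianchi identities for curvature, \eqref{1st} and \eqref{2nd}, are in bijection with spin-connections $\omega^a_b$ satisfying the Fock-Shwinger gauge $i_{\mathfrak r}\omega^a_b=0$ and the torsionless condition \eqref{tsc}. Theorem \ref{th2}: precisely those spin-connections are in bijection with vielbeins satisfying \eqref{one}--\eqref{three}. Theorem \ref{th3}: precisely those vielbeins are in bijection with metrics in normal coordinates, \eqref{nc}. Since in each step the target class is exactly the source class of the following step, no further matching is needed, and the composite bijection is the desired one.

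Next I would assemble the explicit composite maps. In the direction curvature $\to$ metric: set $\omega^a_{b\mu}(x)=\int_0^1 tx^\nu R^a_{b\nu\mu}(tx)\,dt$ by \eqref{rm}, then $e^a_\mu(x)=\delta^a_\mu+\int_0^1\omega^a_{b\mu}(tx)\,tx^b\,dt$ by \eqref{ewc} (equivalently the one-integral form \eqref{lrm}), then $g_{\mu\nu}=e^a_\mu e^a_\nu$ by \eqref{me}. In the direction metric $\to$ curvature: integrate \eqref{egc} with initial condition $e^a_\mu(0)=\delta^a_\mu$ to recover the vielbein, form $\omega^a_b$ from \eqref{sc}, and form $R^a_b$ from \eqref{c}. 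That these two composites are mutually inverse is immediate from the three theorems.

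I do not expect a genuine obstacle here: given Theorems \ref{th1}--\ref{th3} the result is essentially a bookkeeping statement. The one point deserving a line of care is that the regularity hypotheses used throughout (fields non-singular at $x=0$, the ``less singular than $|x|^{-p}$'' condition invoked in \eqref{uid}, and whatever smoothness is needed to apply Picard--Lindel\"of to \eqref{egc}) are the \emph{same} at every node of the chain, so that one is genuinely describing a single nested family of function spaces rather than three unrelated ones; this is already implicit inside each individual theorem, so it suffices to fix the standing regularity class once at the outset.
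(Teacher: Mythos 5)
Your proposal is correct and matches the paper's own proof, which likewise obtains the result by directly composing the bijections of Theorems \ref{th1}, \ref{th2} and \ref{th3} (the explicit composite maps you write out are exactly those summarized in Fig.~\ref{psumm}). Your added remark about fixing a single standing regularity class across the chain is a reasonable extra precaution but does not change the argument.
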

The proof is schematically represented in Fig \ref{psumm}.

\acknowledgments

The author thanks the Rudolf Peierls Centre for their hospitality.

For the purpose of Open Access, the author
has applied a CC BY public copyright licence to any Author Accepted Manuscript version
arising from this submission.

%With this change of variables to hand, a dual formulation of quantum gravity can be constructed by changing variables from the spatial metric to the spatial curvature and then integrating out the spatial curvature in the path integral, leaving the theory formulated entirely in terms of the field that is canonically conjugate to the spatial curvature (which is the analog of the dual connection in YM). Whether this dual formulation facilitates the calculation of non-perturbative effects is to be seen.
%For application to gravity, a lot can be learned from how this technique works in YM (which is a simpler environment in which to apply this technique).

%For this change of variables to be of any use, one has to apply it to the metric operator in quantum gravity and rewrite the theory entirely in terms of the curvature operator and its canonical conjugate. It must then be demonstrated that the dual description facilitates the calculation of strong coupling effects. Obviously, this has also yet to be shown for the (superficially simpler) case of Yang-Mills theory.

\bibliographystyle{JHEP} 
 
\bibliography{biblio} 

\providecommand{\href}[2]{#2}\begingroup\raggedright\begin{thebibliography}{10}

\bibitem{wu-yang}
T.T.~Wu and C.N.~Yang, \emph{{Some remarks about unquantized non-Abelian gauge
  fields}}, {\emph{Physical Review D} {\bfseries 12} (1975) 3843}.

\bibitem{mostow}
M.A.~Mostow, \emph{{The field copy problem: to what extent do curvature (gauge
  field) and its covariant derivatives determine connection (gauge
  potential)?}}, {\emph{Communications in Mathematical Physics} {\bfseries 78}
  (1980) 137}.

\bibitem{halpern}
M.B.~Halpern, \emph{{Field-strength and dual variable formulations of gauge
  theory}}, \href{https://doi.org/10.1103/PhysRevD.19.517}{\emph{Physical
  Review D} {\bfseries 19} (1979) }.

\bibitem{durand}
L.~Durand and E.~Mendel, \emph{{Field-strength formulation of gauge theories:
  Transformation of the functional integral}},
  \href{https://doi.org/10.1103/PhysRevD.26.1368}{\emph{Physical Review D}
  {\bfseries 26} (1982) }.

\bibitem{muller}
U.~Müller, C.~Schubert and A.E.M.~van~de Ven, \emph{{A closed formula for the
  Riemann normal coordinate expansion}},
  \href{https://doi.org/10.1023/A:1026718301634}{\emph{General Relativity and
  Gravitation} {\bfseries 31} (1999) }
  [\href{https://arxiv.org/abs/gr-qc/9712092}{{\ttfamily gr-qc/9712092}}].

\bibitem{Gervals1976}
J.L.~Gervals, A.~Jevicki and B.~Sakita, \emph{{VIII. Collective coordinate
  method for quantization of extended systems}},
  \href{https://doi.org/10.1016/0370-1573(76)90049-1}{\emph{Physics Reports}
  {\bfseries 23} (1976) }.

\bibitem{gervais1976point}
J.-L.~Gervais and A.~Jevicki, \emph{Point canonical transformations in the path
  integral}, {\emph{Nuclear Physics B} {\bfseries 110} (1976) 93}.

\bibitem{mandelstam1975soliton}
S.~Mandelstam, \emph{{Soliton operators for the quantized sine-Gordon
  equation}}, {\emph{Physical Review D} {\bfseries 11} (1975) 3026}.

\bibitem{tomboulis1975canonical}
E.~Tomboulis, \emph{Canonical quantization of nonlinear waves}, {\emph{Physical
  Review D} {\bfseries 12} (1975) 1678}.

\bibitem{christ1975quantum}
N.~Christ and T.~Lee, \emph{Quantum expansion of soliton solutions},
  {\emph{Physical Review D} {\bfseries 12} (1975) 1606}.

\bibitem{Brydges}
D.C.~Brydges and P.~Federbush, \emph{Debye screening},
  \href{https://doi.org/10.1007/BF01197700}{\emph{Communications in
  Mathematical Physics} {\bfseries 73} (1980) }.

\bibitem{clay}
A.~Jaffe and E.~Witten, \emph{{Quantum Yang–Mills Theory}},  2006.
\newblock
  \url{https://www.claymath.org/wp-content/uploads/2022/06/yangmills.pdf}.

\bibitem{hooft}
G.~Hooft, \emph{On the phase transition towards permanent quark confinement},
  {\emph{Nuclear Physics: B} {\bfseries 138} (1978) 1}.

\bibitem{kondo}
K.-I.~Kondo, S.~Kato, A.~Shibata and T.~Shinohara, \emph{{Quark confinement:
  Dual superconductor picture based on a non-Abelian Stokes theorem and
  reformulations of Yang--Mills theory}}, {\emph{Physics Reports} {\bfseries
  579} (2015) 1}.

\bibitem{Seiberg}
N.~Seiberg and E.~Witten, \emph{{Electric-magnetic duality, monopole
  condensation, and confinement in N=2 supersymmetric Yang-Mills theory}},
  \href{https://doi.org/10.1016/0550-3213(94)90124-4}{\emph{Nuclear Physics,
  Section B} {\bfseries 426} (1994) }.

\bibitem{Giacomo}
A.D.~Giacomo, \emph{Monopole condensation and colour confinement},
  \href{https://doi.org/10.1143/ptps.131.161}{\emph{Progress of Theoretical
  Physics Supplement} {\bfseries 131} (1998) }.

\bibitem{Tong}
D.~Tong, \emph{Lecture notes in gauge theory},  2018.
\newblock \url{https://www.damtp.cam.ac.uk/user/tong/gaugetheory.html},
  University of Cambridge.

\bibitem{weinbergcritical}
S.~Weinberg, \emph{Critical phenomena for field theorists},  in
  \emph{Understanding the fundamental constituents of matter}, pp.~1--52,
  Springer (1978).

\bibitem{weinberg}
S.~Weinberg, \emph{{Ultraviolet divergences in quantum theories of
  gravitation}},  in \emph{{General Relativity}: {An Einstein Centenary
  Survey}}, pp.~790--831 (1980).

\bibitem{ambjorn}
J.~Ambj\o{}rn, A.~Goerlich, J.~Jurkiewicz and R.~Loll, \emph{{Nonperturbative
  Quantum Gravity}},
  \href{https://doi.org/10.1016/j.physrep.2012.03.007}{\emph{Phys. Rept.}
  {\bfseries 519} (2012) 127}
  [\href{https://arxiv.org/abs/1203.3591}{{\ttfamily 1203.3591}}].

\bibitem{xavier}
P.D.~Xavier, \emph{{Topics in quantum gravity and quantum field theory}}, Ph.D.
  thesis, University of Oxford, 2021.
\newblock See \S2.1. Available at
  \url{https://ora.ox.ac.uk/objects/uuid:628d8820-c0a5-4fc1-8118-a0a5217ef197}.

\bibitem{wald}
R.M.~Wald, \emph{General relativity}, University of Chicago press (2010).

\bibitem{cronstrom}
C.~Cronström, \emph{{A simple and complete Lorentz-covariant gauge
  condition}},
  \href{https://doi.org/10.1016/0370-2693(80)90738-8}{\emph{Physics Letters B}
  {\bfseries 90} (1980) }.

\bibitem{kummer}
W.~Kummer and J.~Weiser, \emph{{Quantization of gauge-fields in the
  Fock-Schwinger gauge}},
  \href{https://doi.org/10.1007/BF01559599}{\emph{Zeitschrift für Physik C
  Particles and Fields} {\bfseries 31} (1986) }.

\bibitem{jost}
J.~Jost and J.~Jost, \emph{Riemannian geometry and geometric analysis},
  Springer (2008).

\bibitem{Teschl}
G.~Teschl, \emph{{Ordinary Differential Equations and Dynamical Systems
  (preliminary version)}}, AMS (2012), Available at
  \url{https://www.mat.univie.ac.at/~gerald/ftp/book-ode/ode.pdf}.

\end{thebibliography}\endgroup

\end{document}